	\numberwithin{equation}{section}
	\theoremstyle{plain}
	\declaretheorem[numberlike=equation]{theorem}
	\declaretheorem[unnumbered,name=Theorem]{theorem*}
	\declaretheorem[numberlike=equation]{lemma}
	\declaretheorem[unnumbered,name=Lemma]{lemma*}
	\declaretheorem[unnumbered,name=Corollary]{corollary*}
	\declaretheorem[unnumbered,name=Proposition]{proposition*}
	\declaretheorem[unnumbered,name=Claim]{claim*}
	\declaretheorem[unnumbered,name=Conjecture]{conjecture*}
	\declaretheorem[numberlike=equation]{definition}
	\declaretheorem[unnumbered,name=Definition]{definition*}
	\declaretheorem[unnumbered,name=Example]{example*}
	\declaretheorem[unnumbered,name=Notation]{notation*}
	\theoremstyle{remark}
	\declaretheorem[unnumbered,name=Remark]{remark*}
	\DeclareMathOperator{\Enc}{Enc}
	\newcommand{\Z}{\mathbb{Z}}
	\newcommand{\cC}{\mathcal{C}}
	\newcommand{\eqdef}{:=}
\author{
	Michael Forbes
	\thanks{\texttt{miforbes@mit.edu}. Research was done while an intern at Microsoft Research, Silicon Valley.}
\and
	Sergey Yekhanin
	\thanks{\texttt{yekhanin@microsoft.com}}
}
\title{On the Locality of Codeword Symbols \\ in Non-Linear Codes}
\date{\today}
\begin{document}

\maketitle

\begin{abstract}
	Consider a possibly non-linear $(n,K,d)_q$ code. Coordinate $i$ has locality $r$ if its value is determined by some $r$ other coordinates. A recent line of work obtained an optimal trade-off between information locality of codes and their redundancy. Further, for linear codes meeting this trade-off, structure theorems were derived. In this work we give a new proof of the locality / redundancy trade-off and generalize structure theorems to non-linear codes.
\end{abstract}

\section{Introduction}

We say that a certain coordinate of an error-correcting code has locality $r$ if, when erased, the value at this coordinate can be recovered by accessing at most $r$ other coordinates. Motivated by applications to data storage~\cite{HuangSX+12} the authors of~\cite{2012-08-13.1} introduced $(r,d)$-codes, which are systematic codes that have distance $d$ and thus tolerate up to $d-1$ erasures, but also have the property that any information coordinate has locality $r$ or less. They established that in all linear $[n,k,d]_q$ codes with the $(r,d)$-property
\begin{equation}
	\label{Eqn:Fund}
	n\geq k+\left\lceil \frac{k}{r}\right\rceil+d-2.
\end{equation}
In what follows we refer to codes that meet~(\ref{Eqn:Fund}) with equality as \emph{optimal}. A construction of~\cite{HuangCL07} implies that optimal codes exist for all values of parameters. In the natural setting of $r|k$, the lower-bound argument of~\cite{2012-08-13.1} yields structure theorems for optimal linear codes. These theorems are particularly strong when $d<r+3.$ In particular, in that case they imply tight lower bounds for the locality of parity coordinates. The results of~\cite{2012-08-13.1} were recently extended by~\cite{PD_isit,XOR_ELE} who generalized the inequality~(\ref{Eqn:Fund}) (but not the structure theorems) to non-linear codes.

In this paper, we further extend the line of work above. We first give a new proof of the lower bound~(\ref{Eqn:Fund}) for non-linear codes. Then, as in~\cite{2012-08-13.1}, we use the lower-bound argument to derive structure theorems for optimal non-linear codes.

The main technical problem that we have to address going from the lower bound to structure theorems is that of reversibility of local constraints. In linear codes, any local constraint on coordinates in the code must be a linear constraint, and linear constraints are trivially reversible, in that knowing all but 1 coordinate in the constraint always determines that coordinate, regardless of the identity of that 1 coordinate.  However, for non-linear codes it is possible to have local constraints that are not reversible.  For example, it is possible for the coordinates $\{i,i'\}$ to determine the coordinate $i''$, but for the coordinates $\{i',i''\}$ to not determine the coordinate $i$.  However, we show that for optimal $(r,d)$-codes, even in the non-linear case, all locality constraints must be reversible.  Once this is established, the structural results of~\cite{2012-08-13.1} can then be extended to the non-linear case.

\section{Preliminaries}

We will first fix some notation, then define the objects we will be considering.

\subsection{Notation}

Throughout, we consider codes which may be non-linear over an arbitrary alphabet $\Sigma$, where $|\Sigma|=q\ge 2$ is an arbitrary integer. Given two vectors $\vec{x},\vec{y}\in\Sigma^n$, $\Delta(\vec{x},\vec{y})$ will denote the unnormalized Hamming distance $\vec{x}$ and $\vec{y}$. For $S\subseteq[n]$, we will denote $\vec{x}|_S$ for the sequence of symbols in $\vec{x}$ with coordinates in $S$.  When $S=\{i\}$ we will just write $\vec{x}|_i$.  For an integer $n\ge0$, $[n]$ denotes the set $\{1,\ldots,n\}$, where $[0]$ will be understood as the empty-set. For disjoint sets $A$ and $B$, we write $A\sqcup B$ to denote their disjoint union.

\subsection{Definitions}

Recall the definition of a code, which we do not assume to be linear.

\begin{definition}
	A $(n,K,d)_q$ code is a subset $\cC\subseteq\Sigma^n$ with size $|\cC|=K$, such for any $\vec{x}\ne\vec{y}\in\cC$, $\Delta(\vec{x},\vec{y})\ge d$. If $\cC'\subseteq \cC$ then $\cC'$ is a \textbf{sub-code} of $\cC$.  The parameter $n$ will referred to as the \textbf{block-length}, $k=\log_q K$ the \textbf{dimension} and $d$ the \textbf{distance}.
	
	The code is \textbf{systematic} if $k \in\Z$, and there is an encoding function $\Enc:\Sigma^k\to\Sigma^n$ such that for $\vec{x}\in\Sigma^k$, $\Enc(\vec{x})|_i=\vec{x}|_i$, for $i\in[k]$.
\end{definition}

A systematic code takes on all $q^k$ values in its first $k$ coordinates, and the values of these coordinates determine the rest of the codeword.  The first $k$ coordinates of the codewords are thus referred to as the information symbols, other coordinates will be called parity symbols. This work will be interested in codes with local constraints on the information symbols.

\begin{definition}
	A systematic $(n,K,d)_q$ code has \textbf{information locality r} if for every $i\in[k]$, there is a size $\le r$ subset $S\subseteq[n]\setminus\{i\}$ such that for any $\vec{x}\in\cC$, $\vec{x}|_i$ is determined by $\vec{x}|_S$.
\end{definition}

Other symbols, other than the information symbols, can also have locality, and occasionally we will use this.

\section{Locality Lower Bounds}

In this section we establish lower bounds on the block-length of codes with small information locality.  We will then prove structural results for codes meeting this lower bound. As we will often use it, we now prove the Singleton bound.

\begin{lemma}[Singleton Bound]
	\label{singleton}
	Let $\cC$ be an $(n,K,d)_q$ code, with $K>1$.  Then, \[n\ge\log_qK+d-1\]
\end{lemma}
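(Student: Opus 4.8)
The plan is to run the standard "projection/puncturing" argument: delete enough coordinates that any two distinct codewords still have a chance of colliding, and then use the distance hypothesis to show they cannot. Concretely, let $\cC$ be an $(n,K,d)_q$ code with $K>1$, so in particular $d\le n$ (two distinct codewords must differ somewhere, and they differ in at least $d$ positions, so $d\le n$). Fix a set $T\subseteq[n]$ of size $|T| = d-1$, and consider the projection map $\pi\colon \cC\to\Sigma^{[n]\setminus T}$ sending $\vec{x}\mapsto \vec{x}|_{[n]\setminus T}$. I would first check that $\pi$ is injective: if $\vec{x}\ne\vec{y}\in\cC$ and $\pi(\vec{x})=\pi(\vec{y})$, then $\vec{x}$ and $\vec{y}$ agree on all of $[n]\setminus T$, so $\Delta(\vec{x},\vec{y})\le |T| = d-1$, contradicting the minimum-distance hypothesis $\Delta(\vec{x},\vec{y})\ge d$.

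Given injectivity, the image $\pi(\cC)$ has size exactly $K$, and it sits inside $\Sigma^{[n]\setminus T}$, a set of size $q^{n-(d-1)} = q^{n-d+1}$. Hence $K\le q^{n-d+1}$. Taking $\log_q$ of both sides (valid and order-preserving since $q\ge 2$) gives $\log_q K\le n-d+1$, i.e.\ $n\ge \log_q K + d - 1$, which is the claimed bound.

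The only genuinely delicate point is making sure $T$ is well-defined, i.e.\ that $n\ge d-1$; but this is immediate from $K>1$, since any two distinct codewords witness $d\le \Delta(\vec{x},\vec{y})\le n$. Everything else is bookkeeping: the injectivity of the projection is the heart of the argument and follows in one line from the definition of $\Delta$ and the distance hypothesis. I do not anticipate any real obstacle here — this is the textbook Singleton bound, reproduced for completeness because it is invoked repeatedly in the locality arguments that follow.
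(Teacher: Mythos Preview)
Your proof is correct and is essentially identical to the paper's own argument: both puncture $d-1$ coordinates, observe that the projection is injective because any collision would force two distinct codewords to have Hamming distance at most $d-1$, and conclude $K\le q^{n-d+1}$. The only cosmetic difference is that the paper deletes the first $d-1$ coordinates rather than an arbitrary set $T$, which is immaterial.
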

\begin{proof}
	As $K>1$, there are at least two distinct codewords. These codewords are at least $d$-apart. Thus $d\le n$.  Therefore we can delete the first $d-1$ coordinates from each codeword, resulting in a code $\cC'\subseteq \Sigma^{n-d+1}$.  The new code $\cC'$ has distance $\ge 1$, as each pair of original codewords have distance $\ge d$ and we only deleted $d-1$ coordinates. Thus, we have an injective map from $\cC$ to $\Sigma^{n-d+1}$, and thus $\log_qK\le n-d+1$.
\end{proof}

The lower bounds we derive for local codes will follow by analyzing Algorithm~\ref{alg:subcode}. This algorithm will use the local constraints of the code to iteratively find large sub-codes $\cC_j\subseteq\cC\subseteq\Sigma^n$.  In this process, the (effective) block-length will decrease faster than the dimension of the sub-codes, while maintaining the distance.  Thus, the sub-codes become more optimal in terms of rate, and eventually we can apply the Singleton bound, to bound this process.

\begin{algorithm}\caption{Finding sub-codes via locality}
	\label{alg:subcode}
	\begin{algorithmic}[1]
		\Procedure{sub-code}{$\cC,n,k,d,r,q$}
			\State $\cC_0=\cC$
			\State $j=0$
			\While{$|\cC_{j}|>1$}
				\State $j\leftarrow j+1$.
				\State Choose $i_j$ such that $i_j\notin R_{j-1}\eqdef\bigcup_{j'\in[j-1]}(S_{j'}\cup\{i_{j'}\})$
				and the size $\le r$ subset of coordinates $S_j\subseteq[n]\setminus\{i_j\}$ determine the coordinate $i_j$, for all $\vec{x}\in\cC$.
				\label{alg:subcode:fixi}
				\State Let $\vec{\sigma}_j\in\Sigma^{|S_j|}$ be the most frequent element in the multi-set $\{\vec{x}|_{S_j}:\vec{x}\in\cC_{j-1}\}$.\label{alg:subcode:avg}
				\State Define $\cC_{j}\eqdef\{\vec{x}:\vec{x}\in\cC_{j-1}, \vec{x}|_{S_j}=\vec{\sigma}_j\}$.\label{alg:subcode:prune}
			\EndWhile
		\EndProcedure
	\end{algorithmic}
\end{algorithm}

\begin{theorem}
	\label{localsingleton}
	Let $\cC$ be a systematic $(n,q^k,d)_q$ code with information locality $r$.  Then \[n\ge k+\left\lceil\frac{k}{r}\right\rceil+d-2.\]
\end{theorem}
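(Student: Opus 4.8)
The plan is to run Algorithm~\ref{alg:subcode} to completion and apply the Singleton bound (Lemma~\ref{singleton}) to the last sub-code that still has more than one codeword. To set up the accounting, for each iteration $j$ write $a_j \eqdef |R_j|$ for the number of coordinates already touched and $t_j \eqdef |S_j \setminus R_{j-1}|$ for the number of genuinely new coordinates iteration $j$ brings in; then $0 \le t_j \le r$, and since $i_j \notin R_{j-1}$ we get $a_j = a_{j-1} + 1 + t_j$ with $a_0 = 0$, so $a_j = j + \sum_{j'=1}^{j} t_{j'}$. (We may assume $k \ge 1$; the case $k=0$ is vacuous.)

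Two facts about the sub-codes drive everything. First, every coordinate in $R_{j-1}$ is constant on $\cC_{j-1}$: each $S_{j'}$ with $j' < j$ was frozen to $\vec{\sigma}_{j'}$ in line~\ref{alg:subcode:prune}, and each $i_{j'}$ is a function of $\vec{x}|_{S_{j'}}$, which is constant on $\cC_{j'} \supseteq \cC_{j-1}$. Second, because of this, $\vec{x}|_{S_j}$ depends only on the $t_j$ new coordinates $S_j \setminus R_{j-1}$ as $\vec{x}$ ranges over $\cC_{j-1}$, so the multiset in line~\ref{alg:subcode:avg} takes at most $q^{t_j}$ distinct values; its most frequent element occurs at least $|\cC_{j-1}|/q^{t_j}$ times, whence $|\cC_j| \ge |\cC_{j-1}|/q^{t_j}$. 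Telescoping from $|\cC_0| = q^k$ gives $\log_q|\cC_j| \ge k - \sum_{j'=1}^{j} t_{j'}$.

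Next, the loop runs long enough: while $|\cC_{j-1}| > 1$ some information coordinate lies outside $R_{j-1}$ (otherwise all information symbols would be constant on $\cC_{j-1}$, forcing $|\cC_{j-1}|=1$ since the information symbols determine the codeword), and any such coordinate is a valid choice of $i_j$ by the locality hypothesis; since $a_j$ strictly increases the loop halts at some step $J \ge 1$ with $|\cC_J| = 1$. Then $1 = |\cC_J| \ge q^{\,k - \sum_{j=1}^{J} t_j}$ forces $k \le \sum_{j=1}^{J} t_j \le Jr$, so $J \ge \lceil k/r\rceil$. Finally, restrict $\cC_{J-1}$ to the $n - a_{J-1}$ coordinates outside $R_{J-1}$: by the first fact this restriction is injective and distance-preserving, so it is a code with $|\cC_{J-1}| > 1$ codewords, block length $n - a_{J-1}$, and distance $\ge d$. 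Lemma~\ref{singleton} gives $n - a_{J-1} \ge \log_q|\cC_{J-1}| + d - 1 \ge k - \sum_{j=1}^{J-1} t_j + d - 1$; substituting $a_{J-1} = (J-1) + \sum_{j=1}^{J-1} t_j$, the $t_j$-sums cancel and $n \ge k + (J-1) + d - 1 = k + J + d - 2 \ge k + \lceil k/r\rceil + d - 2$.

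The crux is the bookkeeping that makes the two copies of $\sum t_j$ cancel exactly. This works only because we measure the shrinkage of $\cC_j$ by the number $t_j$ of new coordinates rather than by $|S_j|$ — which is precisely what the constancy-of-touched-coordinates fact buys — so that each iteration costs only the single coordinate $i_j$ in net block length, while the newly frozen coordinates of $S_j$ are matched against lost dimension. Getting this matching right, together with the termination bound $\sum t_j \ge k$, is the whole content of the argument.
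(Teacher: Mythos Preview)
Your proof is correct and follows essentially the same approach as the paper: run Algorithm~\ref{alg:subcode}, bound the shrinkage of $|\cC_j|$ by $q^{t_j}$ via averaging over the genuinely new coordinates, use termination to get $\sum t_j \ge k$ and hence enough iterations, and apply the Singleton bound to the last nontrivial sub-code restricted to the untouched coordinates so that the $\sum t_j$ terms cancel. Your bookkeeping via the recurrence $a_j = a_{j-1} + 1 + t_j$ is a slightly cleaner packaging of the paper's disjoint-union computation $|R_\ell| = \ell + \sum_{j\le\ell} t_j$, and your $J$ is the paper's $\ell+1$, but the argument is otherwise identical.
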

\begin{proof}
	We first show that Algorithm~\ref{alg:subcode} is well-defined.  In particular, in Line~\ref{alg:subcode:fixi}, such an $i_j$ exists.  For, by hypothesis $|\cC_{j-1}|>1$, implying there are $\vec{x}\ne\vec{y}\in\cC_{j-1}\subseteq\cC$.  As $\cC$ is systematic, such codewords are determined by their first $k$ coordinates, and thus must differ on at least one of those $k$ coordinates.  Further, they will not differ on coordinates in $R_{j-1}$, as those coordinates are fixed, as we fixed the coordinates in $S_{j'}$ for $j'<j$ by construction in Line~\ref{alg:subcode:prune} and this also fixes the $\{i_{j'}\}_{j'<j}$ by locality.  Thus, any coordinate where $\vec{x}$ and $\vec{y}$ differ will suffice for $i_j$.  In Line~\ref{alg:subcode:fixi} the set $S_j$ exists by our locality assumption on the code $\cC$. Thus, the algorithm is well-defined.
	
	We now analyze the algorithm. We first show that each new sub-code is not too small. Define $T_j\eqdef S_j\setminus R_{j-1}$ and define $t_j\eqdef |T_j|$, which is the number of coordinates we fix in constructing $\cC_j$ that are not necessarily fixed by prior loops in the procedure.  It follows that there are $\le q^{t_j}$ many possibilities for the $\vec{\sigma}_j$ in Line~\ref{alg:subcode:avg}, and thus $|\cC_j|\ge |\cC_{j-1}|/q^{t_j}$ by averaging.

	Given that the sub-codes are not shrinking too fast, we can use this to lower-bound the number of iterations of the algorithm. Let $\ell$ denote the largest value of $j$ such that $|\cC_j|>1$ in the algorithm, and thus $|\cC_{\ell+1}|=1$. By the above bound on sub-code size, we see then that,
	\[0=\log_q|\cC_{\ell+1}|\ge k-\sum_{j=1}^{\ell+1}t_j\]
	and so as $t_j\le |S_j|\le r$,
	\[k\le (\ell+1)r\]
	or equivalently,
	\[\ell\ge \left\lceil \frac{k}{r}\right\rceil -1.\]
	
	We now reduce to the Singleton bound. We first note that $R_\ell$ is the disjoint union $R_\ell=\bigsqcup_{j=1}^\ell (T_j\sqcup\{i_j\})$ and thus $|R_\ell|=\ell+\sum_{j=1}^\ell t_j$. This decomposition of $R_\ell$ is disjoint as the $T_j$ are disjoint by construction, the $i_j$ are distinct by construction, and the $i_j$ are disjoint from the $T_{j'}$: $i_j\notin S_j\supseteq T_j$ by definition of the $S_j$, $i_j\notin T_{j'}$ for $j'>j$ by definition of $T_{j'}$, and $i_j\notin R_{j'}\supseteq T_{j'}$ for $j'< j$ by definition of $i_j$.

	Now consider the code $\cC_\ell\subseteq\cC\subseteq\Sigma^n$, whose codewords all agree on $R_\ell$, by construction.  As it is a sub-code of $\cC$, $\cC_\ell$ also has minimum distance $\ge d$.  It follows that we can delete the coordinates $R_\ell\subseteq[n]$ and bijectively map $\cC_\ell$ to the new code $\cC'\subseteq\Sigma^{n-|R_\ell|}$, which still has distance $\ge d$. By the above arguments, we see that
	\[\log_q|\cC'|=\log_q|\cC_\ell|\ge\log_q|\cC|-\sum_{j=1}^\ell t_j=\log_q|\cC|-|R_\ell|+\ell\]
	Thus, applying the Singleton bound (Lemma~\ref{singleton}) to $\cC'$, we get that,
	\[n-|R_\ell|\ge k-|R_\ell|+\ell+d-1\]
	and thus
	\[n\ge k+\left\lceil \frac{k}{r}\right\rceil+d-2.\qedhere\]
\end{proof}

We now turn to structural results on optimal local codes, that is, codes with information locality $r$ that meet the above bound of $n=k+\lceil k/r\rceil +d-2$.  For simplicity, we will restrict ourselves to the case that $r|k$ so that $k/r$ is integral.  We will first show the local structure for $r=k$, in which case the code is of parameters $(k+d-1,q^k,d)_q$ and is thus a maximum-distance separable (MDS) code.

\begin{lemma}
	\label{MDSlocality}
	Let $\cC$ be a $(k+d-1,q^k,d)_q$ code, that is, a MDS code.  Then for any subset of coordinates $S\subseteq[k+d-1]$ of size $k$, the multi-set $\{\vec{x}|_S:\vec{x}\in\cC\}$ takes on all values in $\Sigma^k$ exactly once, and for any $\vec{x}\in\cC$, $\vec{x}|_S$ determines $\vec{x}$.
\end{lemma}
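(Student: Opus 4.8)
The plan is to argue by a counting/pigeonhole argument combined with the Singleton bound. Fix a subset $S\subseteq[k+d-1]$ of size $k$, and consider the map $\pi_S:\cC\to\Sigma^k$ given by $\vec{x}\mapsto\vec{x}|_S$. The goal is to show $\pi_S$ is a bijection; since $|\cC|=q^k=|\Sigma^k|$, it suffices to show $\pi_S$ is injective, and then surjectivity and the "determines" claim follow immediately. Injectivity is where the distance comes in: if $\vec{x}\ne\vec{y}\in\cC$ agreed on all of $S$, then they would agree on $k$ coordinates, hence differ on at most $(k+d-1)-k=d-1$ coordinates, contradicting $\Delta(\vec{x},\vec{y})\ge d$. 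So $\pi_S$ is injective, hence bijective, so the multi-set $\{\vec{x}|_S:\vec{x}\in\cC\}$ is exactly $\Sigma^k$ with each value appearing once; and since $\pi_S$ is injective, $\vec{x}|_S$ determines $\vec{x}$ among codewords of $\cC$.

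Before running this I should double-check that the hypotheses are consistent, i.e.\ that a code with these parameters has $n=k+d-1\ge k$, so that a size-$k$ subset $S$ exists, and that $d\ge 1$; these are immediate from the definition. The only slightly delicate point is the phrase "determines $\vec{x}$": I interpret this as saying that the value $\vec{x}|_S$ uniquely identifies the codeword $\vec{x}\in\cC$ that produced it, which is precisely injectivity of $\pi_S$, already established. Strictly speaking one does not even need the full Singleton bound here — just the elementary distance computation above — but if desired one could phrase the injectivity step as: deleting the coordinates in $S$ yields a code $\cC'\subseteq\Sigma^{d-1}$ of distance $\ge 1$, and since $|\cC'|$ can be at most $q^{d-1}$ while $|\cC|=q^k$, the deletion map cannot be far from a bijection — but the direct argument is cleaner.

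I expect no real obstacle: the statement is essentially the standard fact that an MDS code is "systematic on every $k$-subset of coordinates." The only thing to be careful about is not to conflate the two assertions in the lemma (that the projection hits every value, versus that it hits each value exactly once) — both drop out once injectivity is proven, because the domain and codomain have equal finite cardinality $q^k$.
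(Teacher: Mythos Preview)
Your proof is correct and follows essentially the same approach as the paper: establish injectivity of the projection to $S$ via the distance argument (agreement on $k$ coordinates forces Hamming distance $\le d-1<d$), then conclude bijectivity by the counting $|\cC|=q^k=|\Sigma^k|$. The paper's proof is organized identically, only without explicitly naming the map $\pi_S$.
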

\begin{proof}
	Suppose $\vec{x},\vec{y}\in\cC$ (possibly equal) have $\vec{x}|_S=\vec{y}|_S$.  Then as $|S|=k$, it follows that $\Delta(\vec{x},\vec{y})\le (k+d-1)-k=d-1<d$. As the minimum distance of any two distinct codewords in $\cC$ is $\ge d$, it follows that $\vec{x}=\vec{y}$, so $\vec{x}|_S$ determines $\vec{x}$.

	Thus, as there are $q^k$ codewords, and the values of the coordinates $\vec{x}|_S$ (taking values in $\Sigma^k$, and $|\Sigma^k|=q^k$) determine the codeword, it follows that each value in $\Sigma^k$ is taken by $\vec{x}|_S$ exactly once, ranging over $\vec{x}\in\cC$.
\end{proof}

This lemma shows that for each symbol, MDS codes have locality $k$ and no less.  When $r|k$ but $r<k$ the situation becomes more complicated, and we derive the following result (Theorem~\ref{disjointlocality}). The heart of this result is item \eqref{disjointlocality:reversibility}, which establishes that in non-linear optimal $(r,d)$-codes, all local constraints will constrain the involved symbols equally, and are thus reversible.  This fact is trivial for linear codes, but more involved for non-linear codes.  Once established, the arguments can follow the case for linear codes as done in~\cite{2012-08-13.1}.

\begin{theorem}
	\label{disjointlocality}
	Let $\cC$ be a systematic $(n,q^k,d)_q$ code with information locality $r$, with $r|k$ and $r<k$.  Suppose $n=k+\frac{k}{r}+d-2$.  Let (possibly equal) information coordinates $i,i'\in[k]$, have associated subsets $S\subseteq[n]\setminus\{i\}$ and $S'\subseteq [n]\setminus\{i'\}$ of size $\le r$, such that $\vec{x}|_S$ determines $\vec{x}|_i$ and $\vec{x}|_{S'}$ determines $\vec{x}|_{i'}$, for all $\vec{x}\in\cC$.  Then
	\begin{enumerate}
		\item $|S|=r$.\label{disjointlocality:size}
		\item For all $i''\in S\cup \{i\}$, $\vec{x}|_{(S\cup\{i\})\setminus\{i''\}}$ determines $\vec{x}|_{i''}$, for all $\vec{x}\in\cC$.\label{disjointlocality:reversibility}
		\item $S\cup\{i\}$ and $S'\cup\{i'\}$ are either equal or disjoint.\label{disjointlocality:disjoint}
		\item Up to a permutation of coordinates, $\cC$ is a code with $k$ information symbols $I$, $k/r$ parities $L$, each depending on a disjoint set of $r$ information symbols, and $d-2$ other parties $H$, depending arbitrarily on the $k$ information symbols.\label{disjointlocality:iso}
	\end{enumerate}
\end{theorem}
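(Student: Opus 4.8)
The overall strategy is to exploit that when $n$ meets the bound of Theorem~\ref{localsingleton} with equality, \emph{every} inequality in that proof must be an equality, for \emph{every} valid run of Algorithm~\ref{alg:subcode}. Concretely, run the algorithm with the first-step choice $i_1=i$, $S_1=S$ (legal since $R_0=\emptyset$), continuing arbitrarily afterwards. The proof of Theorem~\ref{localsingleton} gives $n\ge k+\ell+d-1$ and $\ell\ge k/r-1$; since $n=k+k/r+d-2$ these force $\ell=k/r-1$, then $\sum_{j=1}^{\ell+1}t_j=k$ with each $t_j=r$, and finally all the averaging steps and the closing Singleton step are tight. Since $t_1=|S_1\setminus R_0|=|S|$, this already gives item~\ref{disjointlocality:size}. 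Tightness of the first averaging step forces the most popular fiber of the map $\vec x\mapsto\vec x|_S$ on $\cC$ to have size exactly $q^{k-r}$; as the fibers partition $\cC$ (of size $q^k$) into at most $q^r$ parts, \emph{every} nonempty fiber has size exactly $q^{k-r}$. Call this the \emph{rigidity} property; it is the only consequence of optimality needed below.

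Item~\ref{disjointlocality:reversibility} is the heart. Write $W=S\cup\{i\}$, so $|W|=r+1$. For $i''=i$ the claim is the hypothesis, so fix $i''\in S$ and suppose for contradiction that $\vec x|_{W\setminus\{i''\}}$ fails to determine $\vec x|_{i''}$: there exist $\vec x,\vec y\in\cC$ agreeing on $W\setminus\{i''\}$ with $\vec x|_{i''}\ne\vec y|_{i''}$. Then $\vec x|_S$ and $\vec y|_S$ differ only in coordinate $i''$ while $\vec x|_i=\vec y|_i$. Let $\cE\eqdef\{\vec z\in\cC:\vec z|_{W\setminus\{i''\}}=\vec x|_{W\setminus\{i''\}}\}$. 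Since $\vec z|_S$ determines $\vec z|_i$, the set $\cE$ contains both (disjoint) fibers $\{\vec z:\vec z|_S=\vec x|_S\}$ and $\{\vec z:\vec z|_S=\vec y|_S\}$, so by rigidity $|\cE|\ge 2q^{k-r}$. Deleting the $r$ coordinates $W\setminus\{i''\}$, which are constant on $\cE$, produces a code $\cE'$ of block-length $n-r$, size $\ge 2q^{k-r}>q^{k-r}$, and distance $\ge d$.

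Now run the locality-based reduction of Algorithm~\ref{alg:subcode} on $\cE'$ for $k/r-1$ steps. Although $\cE'$ is not systematic, an anchor coordinate still exists at each step: two distinct codewords of the current sub-code pull back to codewords of $\cC$ that agree on $W\setminus\{i''\}$ and on the already-fixed coordinates, hence differ on some information coordinate of $\cC$, which therefore lies among the coordinates of $\cE'$ and outside the used set; and every information coordinate of $\cC$ retains locality $\le r$ inside $\cE'$ (restrict its defining set to the surviving coordinates). The bound $|\cE'_\mu|\ge |\cE'|/q^{\sum_{j\le\mu}t_j}>q^{k-r}/q^{\mu r}$ (with $t_j\le r$ the analogous quantities for this run) keeps the sub-codes strictly larger than $1$ for $\mu\le k/r-2$, so all $k/r-1$ steps execute; afterwards the sub-code has size $>q^{k-r}/q^{(k/r-1)r}=1$, hence $\ge 2$, with distance $\ge d$. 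Deleting the used set and applying the Singleton bound (Lemma~\ref{singleton}) yields, after cancellation, $n>k+k/r+d-2$, contradicting optimality. This proves item~\ref{disjointlocality:reversibility}; equivalently, $\{\vec x|_W:\vec x\in\cC\}$ is an $(r+1,q^r,2)_q$ MDS code, and Lemma~\ref{MDSlocality} applies to it.

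With reversibility in hand, items~\ref{disjointlocality:disjoint} and~\ref{disjointlocality:iso} follow by the arguments of~\cite{2012-08-13.1}, again powered by rigidity of optimal runs. For item~\ref{disjointlocality:disjoint}: run the algorithm with $i_1=i,S_1=S$ and, when $i'\notin W$, with $i_2=i',S_2=S'$; the forced equality $t_2=r$ means $S'\setminus W$ already has size $r=|S'|$, so $S'$, and hence $W'=S'\cup\{i'\}$, is disjoint from $W$. Thus $W\cap W'\ne\emptyset$ forces $i'\in W$ and, symmetrically, $i\in W'$, after which a short argument using reversibility of both groups—and the fact that every local group contains a parity coordinate, since an all-information group would contradict systematicity—forces $W=W'$. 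Iterating over all information coordinates partitions the relevant coordinates into disjoint local groups of size $r+1$, which, together with the $d-2$ remaining coordinates, is after a permutation of coordinates exactly the description in item~\ref{disjointlocality:iso}. The main obstacle throughout is item~\ref{disjointlocality:reversibility}: unlike the linear case, reversibility cannot be read off a parity-check relation, and the witness $\cE$ for a failure is genuinely non-systematic, so one must re-examine why the reduction of Algorithm~\ref{alg:subcode} still applies to it.
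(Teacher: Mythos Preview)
Your argument for item~\eqref{disjointlocality:reversibility} is correct and genuinely different from the paper's. The paper first runs Algorithm~\ref{alg:subcode} with $i_1=i$, $S_1=S$ to produce the full family $\{S_j\cup\{i_j\}\}_j$, then observes that running the \emph{same} family in reverse order is again a valid run; fixing the last $k/r-1$ groups isolates an $(r+d-1,q^r,d)_q$ MDS sub-code on the coordinates $W\cup H$, and Lemma~\ref{MDSlocality} applied to this sub-code gives the bijectivity of the map $f_{\vec\tau}$ directly. Your route avoids the reverse run and the explicit MDS sub-code: you suppose reversibility fails, use rigidity to manufacture a sub-code $\cE$ of size $\ge 2q^{k-r}$ constant on $W\setminus\{i''\}$, and then rerun the locality reduction on $\cE'$ to squeeze out an extra $\log_q 2$ in the Singleton inequality. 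Both arguments ultimately rest on the same ``all inequalities are tight'' phenomenon, but yours is more self-contained (no appeal to the MDS lemma), while the paper's yields the MDS sub-code as a reusable structural byproduct (it is used again in Theorem~\ref{canonicalstructure}).

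For item~\eqref{disjointlocality:disjoint} your sketch is incomplete. Your first move (if $i'\notin W$ then $t_2=r$ forces $W\cap W'=\emptyset$) is fine, but the promised ``short argument'' for the case $i'\in W$, $i\in W'$ is not given, and the remark about each local group containing a parity coordinate plays no role in any argument I can reconstruct. The clean way---which is exactly what the paper does---is to use the reversibility you just proved: if $W\ne W'$, pick any $i''\in W\setminus W'$, set $S''=W\setminus\{i''\}$ (which determines $i''$ by item~\eqref{disjointlocality:reversibility}), and run the algorithm with $(i_1,S_1)=(i',S')$ and $(i_2,S_2)=(i'',S'')$; tightness then forces $W\cap W'=\emptyset$ in one stroke, with no case split. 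Items~\eqref{disjointlocality:size} and~\eqref{disjointlocality:iso} match the paper.
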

\begin{proof}
	The proof will be by analyzing particular runs of Algorithm~\ref{alg:subcode}, using the analysis of Theorem~\ref{localsingleton}.  By showing that certain inequalities in that analysis must be tight, we will derive the desired results.

	We first establish further properties of the algorithm, in the case that $n=k+k/r+d-2$, by extending the analysis given in Theorem~\ref{localsingleton}.  In particular that analysis shows that
	\[n\ge k+\ell+d-1\]
	and
	\[\sum_{j=1}^{\ell+1}t_j\ge k\]
	where $t_j\le |S_j|\le r$.  By the hypothesis that $n=k+k/r+d-2$, we have that $\ell\le k/r-1\in\Z$, from which it follows that
	\[\sum_{j=1}^{\ell+1}t_j\le (\ell+1)r\le k\]
	Combining the above two equations shows that these inequalities, and those inequalities used to derive them, must be met with equality.  In particular, we have that $t_j=|S_j|=r$ for all $j$, and $\ell=k/r-1$. From this, we can derive that the subsets $S_j\cup\{i_j\}$ have $r+1$ distinct elements, and the family of subsets $\{S_j\cup\{i_j\}\}$ are disjoint.
	
	Further, it follows that the inequalities used in the analysis of Theorem~\ref{localsingleton} to establish that ``$n\ge k+\ell+d-1$'' must also be tight.  In particular, $|\cC_j|=|\cC_{j-1}|/q^r$, for all $j$, implying that $|\cC_j|=q^{k-rj}$, for all $j$.  By the construction of $\cC_j$ in Lines~\ref{alg:subcode:avg}--\ref{alg:subcode:prune}, it follows by an averaging argument that the multi-set $\{\vec{x}|_{S_j}:\vec{x}\in\cC_{j-1}\}$ has $q^r$ distinct elements (the maximum possible), each appearing equally often. In particular, this implies that any choice of $\vec{\sigma}\in\Sigma^r$ in Line~\ref{alg:subcode:avg} is valid, for any choice of the $\{\vec{\sigma}_{j'}\}_{j'<j}$ in the prior iteration of the loops.  Further, once we have chosen these $k/r=\ell+1$ values $\vec{\sigma}_j$, there is a unique codeword in $\vec{x}\in\cC$ such that $\vec{x}|_{S_j}=\vec{\sigma}_j$ for all $j$, as $|\cC_{k/r}|=1$ by construction.  It follows then that the values in the $k$ coordinates $\sqcup_j S_j$ are completely independent, take on all $q^k$ possible values, and uniquely determine a codeword in $\cC$.
	
	We will now use these facts applied to particular runs of the algorithm.

	\uline{\eqref{disjointlocality:size}:} Consider Algorithm~\ref{alg:subcode} where we choose $i_1\leftarrow i$ and $S_1\leftarrow S$.  The choice of $i_1$ is valid, for we are free to choose any $i_1\in[n]$, as $R_1=\emptyset$. The choice of $S_1$ is also valid, as $S$ is a valid locality constraint on $i_1$. We then continue with Algorithm~\ref{alg:subcode}, to define the sets $S_j$ and codes $\cC_j$.  From the analysis above, it follows that $|S_j|=r$ for all $j$, in particular $|S|=|S_1|=r$.
	
	\uline{\eqref{disjointlocality:reversibility}:} For each $\vec{\tau}\in\Sigma^{r-1}$, consider the map $f_{\vec{\tau}}:\Sigma\rightarrow \Sigma$, such that $f_{\vec{\tau}}(\vec{x}|_{i''})=\vec{x}|_{i}$ for all $\vec{x}\in\cC$ such that $\vec{x}|_{S\setminus\{i''\}}=\vec{\tau}$.  This is well-defined, as the coordinates $S=(S\setminus\{i''\})\cup\{i''\}$ determine the coordinate $i$ by locality, and the coordinates in $S$ take on all $q^r$ values by the analysis above, so for each $\vec{\tau}$, $f_{\vec{\tau}}$ must be defined for each input.  The claim will be established by showing that, for each $\vec{\tau}$, this map is in fact a bijection, which by will follow from showing that it is injective.  To do this, we will analyze properties of sub-codes of $\cC$, which will follow from analysis of Algorithm~\ref{alg:subcode}.
	
	As in \eqref{disjointlocality:size}, we first run Algorithm~\ref{alg:subcode} with $i_1\leftarrow i$ and $S_1\leftarrow S$, and the algorithm yields the $k/r$ coordinates $\{i_j\}_{j\in[k/r]}$ and respective $k/r$ locality constraints $\{S_j\}_{j\in[k/r]}$, such that the family of $(r+1)$-sized subsets $\{S_j\cup\{i_j\}\}_j$ are disjoint.
	
	Now observe that choosing the $\{S_j\cup\{i_j\}\}_j$ in reverse order is also a valid run of Algorithm~\ref{alg:subcode}, because the conditions in Line~\ref{alg:subcode:fixi} hold regardless of the order of $j$ in the sequence $(S_j\cup\{i_j\})_j$. Consider the code $\cC'$ resulting from the algorithm run in reverse order, where we have fixed the coordinates $\{S_j\cup\{i_j\}\}_{j>1}$, and so $\cC'$ is the last code in this reverse-run algorithm that has more than one codeword.  As such, $|\cC'|=q^r$, by the analysis above.  It has $n-(k/r-1)(r+1)=r+d-1$ non-fixed coordinates, and distance $\ge d$.  The non-fixed coordinates include $i$, $S$, and the remaining $d-2$ coordinates~\footnote{Note that $d\ge2$ is implied here.  That $d=1$ is impossible can be seen most easily in the case when $r=k$.  For then, the parameters imply that the code is simply the list of all $q^k$ words, and so no information locality is possible, for each coordinate is fully independent of the rest.  That $d=1$ is impossible for $r<k$ can be seen by reducing to a sub-code where $r=k$ via Algorithm~\ref{alg:subcode}.}. This implies that $\cC'$ (when the fixed coordinates are dropped) is an MDS code. In particular, consider the two sets of coordinates, $S$ and $S\cup\{i\}\setminus\{i''\}$, in the code $\cC'$.  By Lemma~\ref{MDSlocality}, both of these sets of  coordinates take on all possible values in $\Sigma^r$, and determine the other $d-1$ coordinates in $\cC'$.
	
	We now show $f_{\vec{\tau}}$ is injective, for any $\vec{\tau}\in\Sigma^{r-1}$.  Suppose $f_{\vec{\tau}}(\rho)=f_{\vec{\tau}}(\omega)=\sigma$, for some (possibly equal) $\rho,\omega,\sigma\in\Sigma$. As the coordinates $S$ take on all $q^r$ values in $\cC'$, there are two codewords, $\vec{x},\vec{y}\in\cC'$ such that $\vec{x}|_{S\setminus\{i''\}}=\vec{y}|_{S\setminus\{i''\}}=\vec{\tau}$, $\vec{x}|_{i''}=\rho$ and $\vec{y}|_{i''}=\omega$.  By the locality in $\cC$, this means that  $\vec{x}|_{i}=\vec{y}|_{i}=\sigma$.  By the locality in $\cC'$ just established, $\vec{x}|_{i''}$ is determined by $\vec{x}|_{S\setminus\{i''\}}=\vec{y}|_{S\setminus\{i''\}}=\vec{\tau}$ and $\vec{x}|_{i}=\vec{y}|_{i}=\sigma$, and thus $\rho=\vec{x}|_{i''}=\vec{y}|_{i''}=\omega$. Thus, $f_{\vec{\tau}}$ must be injective, for every $\vec{\tau}$.

	\uline{\eqref{disjointlocality:disjoint}:} Suppose $S\cup\{i\}$ and $S'\cup\{i'\}$ are not equal, and we seek to show they are disjoint.  Let $i''$ be the index of a coordinate in one of the sets but not the other, and without loss of generality, $i''\in(S\cup\{i\})\setminus (S'\cup\{i'\})$.  Define $S''\eqdef S\cup\{i\}\setminus\{i''\}$, and observe that by \eqref{disjointlocality:reversibility} applied to $i''\in S\cup\{i\}$ implies that the coordinates in $S''$ determine the coordinate $i''$.

	Now run Algorithm~\ref{alg:subcode}, choosing $i_1\leftarrow i'$, $S_1\leftarrow S'$, and $i_2\leftarrow i''$, $S_2\leftarrow S''$.  The first round choice of coordinate/locality is clearly valid.  That a second round will even by executed follows from the above analysis, as the code $\cC_1$ has size $q^{k-r}>1$, using that $r<k$.  Further, the choices of coordinate/locality are valid because $i''\notin R_1=S'\cup\{i'\}$ and $S''$ determines $i''$ as established above.  Thus, we have a valid initial run of the algorithm, which we can than run to completion to yield the family $\{S_j\cup\{i_j\}\}_j$.  By the above analysis of the algorithm, it follows that the sets $\{S_j\cup\{i_j\}\}_j$ are disjoint.  In particular, $S'\cup\{i'\}$ and $S''\cup\{i''\}=S\cup\{i\}$ are disjoint, as desired.

	\uline{\eqref{disjointlocality:iso}:} The above analysis shows that a run of Algorithm~\ref{alg:subcode} yields the $k/r$ disjoint sets of coordinates $\{S_j\}_j$ that take on all $q^k$ values and uniquely determine the codeword. Treating these $k$ coordinates as information symbols, and the corresponding coordinates $\{i_j\}_j$ as the $k/r$ parities $L$, we get the desired reordering of the coordinates.
\end{proof}

The above analysis shows that the locality constraints are disjoint sets of size $r+1$, but does not indicate how many such constraints there are.  In the case that $d<r+3$, we can show that there are exactly $k/r$ such local constraints, and can characterize their structure.  In fact, just as in the linear case in~\cite{2012-08-13.1}, we see that the locality structure of optimal $(r,d)$-codes resembles that of Pyramid codes~\cite{HuangCL07}.

\begin{theorem}
	\label{canonicalstructure}
	Let $\cC$ be a systematic $(n,q^k,d)_q$ code with information locality $r$, with $r|k$ and $r<k$.  Suppose $n=k+\frac{k}{r}+d-2$ and $d<r+3$. Then the $k/r+d-2$ parity symbols can be partitioned into $L$ and $H$, with $|L|=k/r$ and $|H|=d-2$, where
	\begin{enumerate}
		\item The parities in $L$, each depend on a disjoint subset of size $r$ of the $k$ information symbols.\label{canonicalstructure:lightparities}
		\item The parities in $H$, each depend on all of the $k$ information symbols.\label{canonicalstructure:heavyparities}
		\item The parities in $L$ have locality exactly $r$.\label{canonicalstructure:lightlocalitylb}
		\item The parities in $H$ have locality $\ge k-(k/r-1)(d-3)>r$.\label{canonicalstructure:heavylocalitylb}
	\end{enumerate}
\end{theorem}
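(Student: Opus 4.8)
The plan is to derive everything from Theorem~\ref{disjointlocality}. Its item~\eqref{disjointlocality:iso} already supplies the partition: after a permutation of coordinates, $[n]=I\sqcup L\sqcup H$ with $|I|=k$, $L=\{p_1,\dots,p_{k/r}\}$ and $|H|=d-2$, where the information symbols split into disjoint groups $G_1,\dots,G_{k/r}$ of size $r$ and each $p_j$ is a function of $G_j$; this is exactly item~\eqref{canonicalstructure:lightparities}, so only the remaining three items need work. I would lean on two facts already contained in the preceding proofs. (a) \emph{Algorithm tightness}: for the parameters at hand, every valid run of Algorithm~\ref{alg:subcode} has $|S_j|=r$ for all $j$ (this is exactly the computation opening the proof of Theorem~\ref{disjointlocality}), and Line~\ref{alg:subcode:fixi} permits $i_j$ to be any coordinate, not just an information one. (b) \emph{MDS slices}: fixing the information symbols of all groups but one, say all but $G_1$, forces the $k/r-1$ associated parities $p_2,\dots,p_{k/r}$ to be fixed as well, leaving a sub-code on the $r+d-1$ coordinates $G_1\cup\{p_1\}\cup H$ which, by the Singleton bound (Lemma~\ref{singleton}), is an MDS code of dimension $r$; Lemma~\ref{MDSlocality} then applies to it, and such a code has no constant coordinate and no coordinate determined by fewer than $r$ other coordinates, since a map from $q^r$ codewords into $\Sigma^{r-1}$ cannot be injective.

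For item~\eqref{canonicalstructure:lightlocalitylb} I would first note that $p_j$ has locality $\le r$, because $(G_j\cup\{p_j\})\setminus\{p_j\}=G_j$ determines $p_j$ by Theorem~\ref{disjointlocality}\eqref{disjointlocality:reversibility}. For the matching lower bound, take any locality set $S$ of $p_j$ with $|S|\le r$ and run Algorithm~\ref{alg:subcode} with $i_1\leftarrow p_j$ and $S_1\leftarrow S$; this is a valid run by~(a), and the tightness in~(a) forces $|S|=|S_1|=r$, so $p_j$ has locality exactly $r$. For item~\eqref{canonicalstructure:heavyparities}, interpret ``$h$ depends on the information symbol $c$'' as ``$h$, viewed as a function of the $k$ information symbols, is non-constant in the $c$-th argument'', which is well defined since the information symbols range over all of $\Sigma^k$. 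Suppose some $h\in H$ is independent of $c$, say $c\in G_1$, and form the MDS slice $\cC'$ of~(b). Then $h$ would be determined in $\cC'$ by the $r-1$ coordinates $G_1\setminus\{c\}$, contradicting that no coordinate of an $r$-dimensional MDS code has locality below $r$. Hence every $h\in H$ depends on every information symbol.

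The crux is item~\eqref{canonicalstructure:heavylocalitylb}. Let $S_h$ be any locality set of $h\in H$ and set $a_j=|S_h\cap(G_j\cup\{p_j\})|$ for $j\in[k/r]$ and $b=|S_h\cap(H\setminus\{h\})|$, so $b\le d-3$ and $|S_h|=\sum_j a_j+b$. I would prove the estimate $a_j+b\ge r$ for every $j$. Suppose not, say $a_1+b\le r-1$, and pass to the MDS slice $\cC'$ on $G_1\cup\{p_1\}\cup H$. Every coordinate of $S_h$ outside $\cC'$ is constant on $\cC'$, and the coordinates of $S_h$ inside $\cC'$ number exactly $a_1+b\le r-1$; choose a set $T$ of $r$ coordinates of $\cC'$ containing $h$ together with those coordinates. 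By Lemma~\ref{MDSlocality} the restriction of $\cC'$ to $T$ is a bijection onto $\Sigma^r$, so there are two codewords of $\cC'$, hence of $\cC$, that agree on $T\setminus\{h\}\supseteq S_h$ but differ at $h$, contradicting that $S_h$ determines $h$. Given the estimate, $|S_h|\ge (k/r)(r-b)+b=k-(k/r-1)b\ge k-(k/r-1)(d-3)$, and since $k/r\ge2$ and $d<r+3$ (equivalently $r>d-3$) this last quantity exceeds $r$, which also separates $H$ from $L$.

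The main obstacle will be the per-group estimate $a_j+b\ge r$: in the linear case it is the elementary fact that spanning the linear form computing $h$ requires $r$ independent group-forms per group, of which the $b\le d-3$ available heavy parities supply at most $b$, but here it must be obtained combinatorially from the MDS counting of Lemma~\ref{MDSlocality} as above. The one other thing to watch is the consistency of the ``depends on'' notion between $\cC$ and its MDS slices, which the framing in the previous paragraph is designed to handle.
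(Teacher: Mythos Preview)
Your argument is correct and runs parallel to the paper's, with the same backbone: invoke Theorem~\ref{disjointlocality} for the partition and reversibility, then use the MDS slices obtained by freezing all information groups but one to prove the locality lower bounds. The bookkeeping you set up for item~\eqref{canonicalstructure:heavylocalitylb}, with $a_j=|S_h\cap(G_j\cup\{p_j\})|$ and $b=|S_h\cap(H\setminus\{h\})|$ and the per-group estimate $a_j+b\ge r$, is exactly the paper's double-counting argument (and in fact your inclusion of $p_j$ in $a_j$ is the cleaner way to phrase it, since $p_j$ is among the unfixed coordinates of the slice). Two small deviations are worth noting. For item~\eqref{canonicalstructure:heavyparities} the paper argues directly from distance: two messages differing only at $c\in[k]$ already agree on $k-1$ information symbols and $k/r-1$ light parities, so the remaining $d$ coordinates (including all of $H$) must all differ; your route through the MDS slice is equally valid and has the virtue of reusing the machinery needed for~\eqref{canonicalstructure:heavylocalitylb}. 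For item~\eqref{canonicalstructure:lightlocalitylb} the paper observes that, after the permutation of Theorem~\ref{disjointlocality}\eqref{disjointlocality:iso}, each $p_j$ plays the role of an information symbol, so Theorem~\ref{disjointlocality}\eqref{disjointlocality:size} applies verbatim; your alternative of feeding $i_1\leftarrow p_j$, $S_1\leftarrow S$ into Algorithm~\ref{alg:subcode} and invoking the tightness analysis is also sound, since Line~\ref{alg:subcode:fixi} does not restrict $i_1$ to $[k]$ and subsequent rounds are well-defined by the usual argument. One notational slip: in your proof of $a_j+b\ge r$ you write ``$T\setminus\{h\}\supseteq S_h$'', but $T\setminus\{h\}$ only contains $S_h\cap(G_1\cup\{p_1\}\cup H)$; the conclusion that the two codewords agree on all of $S_h$ is still correct because the remaining coordinates of $S_h$ are fixed on $\cC'$, as you yourself note a sentence earlier.
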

\begin{proof}
	\uline{\eqref{canonicalstructure:lightparities}:} We continue with the analysis given in the proof of Theorem~\ref{disjointlocality}.  In particular, it implies that the $n$ coordinates carry a family of subsets of size $r+1$, such that any $r$ of the $r+1$ coordinates determine the other.  By hypothesis, each information coordinate participates in such a subset, and the analysis in Theorem~\ref{disjointlocality} shows that there are at least $k/r$ such subsets.  This leaves at most $n-k/r\cdot(r+1)=d-2$ coordinates that do not participate in these locality constraints, and as $d-2<r+1$, there are in fact exactly $k/r$ disjoint locality constraints and exactly $d-2$ coordinates not covered by locality constraints.  Thus, there are $k/r\cdot(r+1)-k=k/r$ parity symbols participating in locality constraints, and let this set be $L$, which has the desired properties.

	\uline{\eqref{canonicalstructure:heavyparities}:} We now show that the parities in $H\eqdef [n]\setminus ([k]\cup L)$ depend on each of the information symbols.  Consider some information symbol $i\in [k]$, and consider $\sigma\ne\sigma'\in\Sigma$.  Let $\vec{x},\vec{x}'\in\Sigma^k$ be such that $\vec{x}|_i=\sigma$ and $\vec{x}'|_i=\sigma'$, and $\vec{x}|_j=\vec{x}'|_j$ for $i\ne j\in[k]$.  It follows then that $\Enc(\vec{x})\ne\Enc(\vec{x}')$, and these codewords agree in $(k-1)+k/r-1$ places: they agree in $k-1$ information symbols by construction, and they agree in all but one of the $k/r$ light parities (the light parity grouped with coordinate $i$ being the exception).  As the code has minimum distance $d$, and there are only $d$ coordinates that these distinct codewords can differ on, it follows that $\Enc(\vec{x})$ and $\Enc(\vec{x}')$ differ on all these coordinates, in particular the $d-2$ heavy parities $H$.  Thus, changing any information coordinate will change all heavy parities, showing that each coordinate in $H$ depends on each of the $k$ information coordinates.

	\uline{\eqref{canonicalstructure:lightlocalitylb}:} This follows from Theorem~\ref{disjointlocality}, as the light parities $L$ can, under a permutation of coordinates, be regarded as information symbols, and thus cannot have locality $<r$.

	\uline{\eqref{canonicalstructure:heavylocalitylb}:} Let $[k]=\sqcup_{j\in[k/r]} I_j$ be the partition of the information symbols into size $r$ subsets based on the grouping defined by the light parities. Pick arbitrary $\vec{\sigma}_j\in\Sigma^r$ for $j\in[k/r]$.  Define the code $\cC_j\subseteq\cC$ to be $\cC_j\eqdef\{\vec{x}:\vec{x}\in\cC, \vec{x}|_{I_j'}=\vec{\sigma}_{j'}, j'\in[k/r]\setminus\{j\}\}$.  It follows that in $\cC_j$, all light parities are fixed except for the $j$-th.  Thus, $\cC_j$ has $(r+1)+(d-2)=r+d-1$ unfixed symbols and has $q^r$ codewords, as all but $r$ information symbols are fixed.  As $\cC_j$ is a sub-code of $\cC$, it follows that it has distance $\ge d$ and is thus is an MDS code.

	Consider any heavy parity $h\in H$ determined by a set of coordinates $S\subseteq[n]\setminus\{h\}$ for all codewords in $\cC$, and thus for all codewords in $\cC_j$ for any $j$. By Lemma~\ref{MDSlocality}, we see that any $r$ symbols in $\cC_j$ are independent, so any locality constraint for $h$ in $\cC_j$ must involve at least $r$ other symbols.  As codewords in $\cC_j$ are only unfixed on the coordinates $I_j$ and $H$, it follows that $|S\cap(I_j\cup H\setminus\{h\})|\ge r$.  Ranging this inequality over all $j\in[k/r]$, and accounting for double counting over $H\setminus\{h\}$, we see that $|S|\ge k-(k/r-1)(d-3)$, as desired.
\end{proof}

\bibliographystyle{alphaurl}
\bibliography{code-locality}

\newcommand{\etalchar}[1]{$^{#1}$}
\begin{thebibliography}{GHSY12}

\bibitem[GHSY12]{2012-08-13.1}
Parikshit Gopalan, Cheng Huang, Huseyin Simitci, and Sergey Yekhanin.
\newblock \href {http://dx.doi.org/10.1109/TIT.2012.2208937} {On the locality
  of codeword symbols}.
\newblock {\em Information Theory, IEEE Transactions on}, 58(11):6925 --6934,
  Nov. 2012.

\bibitem[HCL07]{HuangCL07}
Cheng Huang, Minghua Chen, and Jin Li.
\newblock \href {http://doi.ieeecomputersociety.org/10.1109/NCA.2007.37}
  {Pyramid codes: Flexible schemes to trade space for access efficiency in
  reliable data storage systems}.
\newblock In {\em Sixth IEEE International Symposium on Network Computing and
  Applications (NCA 2007)}, pages 79--86, 2007.

\bibitem[HSX{\etalchar{+}}12]{HuangSX+12}
Cheng Huang, Huseyin Simitci, Yikang Xu, Aaron Ogus, Brad Calder, Parikshit
  Gopalan, Jin Li, and Sergey Yekhanin.
\newblock \href {http://dl.acm.org/citation.cfm?id=2342821.2342823} {Erasure
  coding in {W}indows {A}zure {S}torage}.
\newblock In {\em Proceedings of the 2012 USENIX conference on Annual Technical
  Conference}, pages 2--2, 2012.

\bibitem[PD12]{PD_isit}
Dimitris~S. Papailiopoulos and Alexandros~G. Dimakis.
\newblock \href {http://dx.doi.org/10.1109/ISIT.2012.6284027} {Locally
  repairable codes}.
\newblock In {\em Proceedings of the 2012 IEEE International Symposium on
  Information Theory (ISIT)}, pages 2771--2775, 2012.

\bibitem[SAP{\etalchar{+}}13]{XOR_ELE}
Maheswaran Sathiamoorthy, Megasthenis Asteris, Dimitris~S. Papailiopoulos,
  Alexandros~G. Dimakis, Ramkumar Vadali, Scott Chen, and Dhruba Borthakur.
\newblock \href {http://arxiv.org/abs/1301.3791} {Xoring elephants: Novel
  erasure codes for big data}.
\newblock {\em arXiv}, abs/1301.3791, 2013.

\end{thebibliography}

\end{document}